\newtheorem{theorem}{Theorem}
\title{\textbf{A note on the $g$ and $h$ control charts}}
\author{\textsf{Chanseok Park}\\
Applied Statistics Laboratory\\ 
Department of Industrial Engineering\\ Pusan National University\\
Busan 46241, Korea
\and
\textsf{Min Wang}\\
Department of Management Science and Statistics \\
University of Texas at San Antonio \\
San Antonio, TX 78249, USA
}
\date{}
\begin{document}
\maketitle
\begin{abstract}
In this note, we revisit the $g$ and $h$ control charts that are commonly used 
for monitoring the number of conforming cases between the two consecutive appearances of nonconformities.
It is known that the process parameter
of these charts is usually unknown and estimated by using the maximum likelihood
estimator and the minimum variance unbiased estimator.
However, the minimum variance unbiased estimator in the control charts has been inappropriately
used in the quality engineering literature. This observation motivates us to provide the 
correct minimum variance unbiased estimator and investigate theoretical and empirical biases of these estimators under 
consideration. Given that these charts are developed based on the underlying assumption that
samples from the process should be balanced, which is often not satisfied in many practical 
applications, we propose a method for constructing these charts with unbalanced samples.

\noindent {\it Keywords:} control charts, geometric distribution, 
maximum likelihood estimator, minimum variance unbiased estimator, $g$ and $h$ charts.

\end{abstract}

\section{Introduction}
In an introductory statistics course, the geometric distribution is defined as a probability distribution 
that represents the number of failures (or normal cases) before observing the first success (or adverse case) in a
series of Bernoulli trials. Based on this distribution, \cite{Kaminsky/etc:1992} proposed 
Shewhart-type statistical control charts, the so-called $g$ and $h$ charts, 
for monitoring the number of conforming cases between the two consecutive appearances of nonconformities.
Since then they have been widely used for monitoring the control process especially in the healthcare
department; see, for example, \cite{Benneyan:1999}, \cite{Benneyan:2000}, to name just a few.  

The process parameter in the $g$ and $h$ charts is usually unknown and needs to be estimated in 
the control chart procedures. 
One can employ the maximum likelihood (ML) estimator and the minimum variance
unbiased (MVU) estimator for the process parameter. 
Of particular note is that the MVU estimator in the control charts has been inappropriately used in the quality engineering literature. 
This motivates us to obtain the correct MVU estimator 
and investigate theoretical biases of the estimators considered in this note.
Furthermore, Monte Carlo simulations are conducted to investigate the empirical biases of these estimators. 
Numerical results show that the theoretical and empirical biases of the existing estimators are severe
when the sample size is small and the value of process parameter is large and that those of the 
proposed MVU estimator are always very close to zero for all the simulated scenarios.

It deserves mentioning that these conventional $g$ and $h$ charts are developed based on 
the underlying assumption that samples from the process should be balanced 
so that the samples have the same size, whereas such an assumption can be restrictive 
and may not be satisfied in many practical applications. To overcome this issue, 
we propose the method of how to construct the $g$ and $h$ charts with
unbalanced samples.

The remainder of this note is organized as follows. In Section \ref{Section:02}, we briefly review the geometric and negative binomial distributions
and then provide the correct MVU estimator for the process parameter. In Section \ref{Section:03}, we obtain the parameter estimation with unequal sample sizes and investigate theoretical 
and empirical properties of these estimators considered in this note.
In Section \ref{Section:04}, based on the proposed estimator, 
we provide a method of constructing the $g$ and $h$ charts with unbalanced samples. 
Concluding remarks are provided in Section \ref{Section:05}. 

\section{Basic properties of the geometric and negative binomial distributions} \label{Section:02}
Let $Y_i$ be independent and identically distributed (iid) according to the shifted geometric distribution
with location shift $a$ and Bernoulli probability $p$ for $i=1, 2, \ldots$.
Then its probability mass function (pmf) is given by
\begin{equation} \label{EQ:pmfofGeo}
f(y) = P(Y_i=y) = p (1-p)^{y-a},
\end{equation}
where $y=a, a+1, \ldots$ and
$a$ is the known minimum possible number of events (usually $a=0,1$).
The mean and variance of $Y_i$ are respectively given by
\[
E(Y_i) = \frac{1-p}{p}+a \quad\textrm{and}\quad
\mathrm{Var}(Y_i) = \frac{1-p}{p^2}.
\]

For notational convenience, we let $T_n=\sum_{i=1}^{n}Y_i$.
Then $T_n$ has the (shifted) negative binomial with predefined location shift $na$ and Bernoulli probability $p$
and its pmf is given by
\begin{equation} \label{EQ:pmfofNGn}
g_n(t) = P(T_n=t) = \binom{t-na+n-1}{n-1} p^n (1-p)^{t-na},
\end{equation}
where $t = na, na+1, \ldots$.
The mean and variance of $T_n$ are respectively given by
\[
E(T_n) = \frac{n(1-p)}{p} + na \quad\textrm{and}\quad
\mathrm{Var}(Y_i) = \frac{n(1-p)}{p^2}.
\]

It is well known that the method of moments and the method of ML
yield the same estimator of $p$, which is given by
\begin{equation} \label{EQ:MLE0}
\hat{p}_{\mathrm{ml}} = \frac{1}{\bar{Y}-a+1},
\end{equation}
where $\bar{Y} = \sum_{i=1}^{n}Y_i/n$.
It is worth noting that this estimator is not unbiased and that we 
are able to identify the best unbiased
estimator of $p$ summarized in the following theorem. 
\begin{theorem} \label{THM:MVU}
The MVU estimator for the parameter of the geometric distribution in (\ref{EQ:pmfofGeo}) is given by
\[
\hat{p}_{\mathrm{mvu}} = \frac{n-1}{\sum_{i=1}^{n}Y_i - na + n-1}
            = \frac{(n-1)/n}{\bar{Y}-a+1-1/n}.
\]
\end{theorem}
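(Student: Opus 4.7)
The plan is to invoke the Lehmann--Scheff\'e theorem. Since the shifted geometric pmf in (\ref{EQ:pmfofGeo}) forms a one-parameter exponential family in $p$, the statistic $T_n = \sum_{i=1}^n Y_i$ is sufficient and complete, and by (\ref{EQ:pmfofNGn}) it follows a shifted negative binomial distribution. Consequently, any function $h(T_n)$ that is unbiased for $p$ is automatically the unique MVU estimator, so the entire problem reduces to finding the right $h$.

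To pin down $h$, I would write out the unbiasedness identity $E_p[h(T_n)] = p$ using (\ref{EQ:pmfofNGn}), reindex by $s = t - na$, and divide through by $p$. This leaves the identity
\[
\sum_{s=0}^{\infty} h(s+na)\binom{s+n-1}{n-1} p^{n-1}(1-p)^{s} = 1.
\]
The key observation is that the pmf of a negative binomial with $n-1$ successes and parameter $p$ has weight $\binom{s+n-2}{n-2} p^{n-1}(1-p)^{s}$ and sums to $1$. Matching term by term (valid because the equality must hold for all $p$ in an open set) forces
\[
h(s+na) \;=\; \frac{\binom{s+n-2}{n-2}}{\binom{s+n-1}{n-1}} \;=\; \frac{n-1}{s+n-1},
\]
after a short factorial cancellation. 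Substituting $s = T_n - na$ yields $h(T_n) = (n-1)/(T_n - na + n - 1)$, which is the stated expression.

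The main obstacle is really the pattern-recognition step: noticing that one must reduce the index from $n$ to $n-1$ so that the expectation collapses to a recognizable probability sum. Once this shift is made, the ratio of binomial coefficients simplifies cleanly to $(n-1)/(s+n-1)$, and the remaining work (verifying unbiasedness is well-defined for $T_n \geq na$, citing completeness from the exponential-family form, and invoking Lehmann--Scheff\'e) is routine. I would close by remarking that $T_n - na + n - 1 \geq n-1 > 0$ almost surely for $n \geq 2$, so $h(T_n)$ is finite with probability one and the estimator is well-defined.
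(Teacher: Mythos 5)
Your proof is correct, but it takes a different route from the paper's. The paper starts from the simple unbiased estimator $\delta = I(Y_n = a)$, which satisfies $E(\delta) = P(Y_n = a) = p$, and Rao--Blackwellizes it: it computes $\eta(t) = E[\delta \mid T_n = t] = P(Y_n = a)\,P(T_{n-1} = t-a)/P(T_n = t)$ using the independence of $Y_n$ and $T_{n-1}$, and the ratio of the geometric and negative binomial pmfs collapses to $(n-1)/(t - na + n - 1)$. You instead solve the unbiasedness equation $E_p[h(T_n)] = p$ directly, recognizing after dividing by $p$ that the left side must match the normalization of a negative binomial with $n-1$ successes, and then equate power-series coefficients in $1-p$ to force $h$. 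Both arguments land on the same ratio of binomial coefficients and both rest on completeness of $T_n$ in the exponential family; the difference is that the paper's conditioning argument hands you an unbiased candidate for free and only requires evaluating a conditional probability, whereas your coefficient-matching argument simultaneously proves existence \emph{and} uniqueness of the unbiased function of $T_n$ (in effect re-deriving the completeness you then cite), at the cost of having to justify that a power series vanishing on $(0,1)$ has all coefficients zero. Your closing remark that the estimator is well-defined only for $n \geq 2$ is a point the paper leaves implicit (its use of $T_{n-1}$ also silently assumes $n \geq 2$), so that is a small bonus rather than a defect.
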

\begin{proof}
It is immediate from \cite{Lehmann/Casella:1998} that
$T_n=\sum_{i=1}^{n}Y_i$ is a complete sufficient statistic since the joint mass functions of iid 
geometric distributions form an exponential family.
Thus, we can employ the Rao-Blackwell theorem \citep{Rao:1945,Blackwell:1947} 
to obtain the MVU estimator of $p$ as follows.

Let $\delta = {I}(Y_n = a)$ where $I(\cdot)$ is the indicator function.
Then $\delta$ is an unbiased estimator of $p$ since
\[
E(\delta) = P(Y_n = a) = p.
\]
Conditioning the unbiased estimator $\delta$ on the complete sufficient statistic $T_n=t$
and taking the expectation, we can obtain the MVU estimate, denoted by $\eta(t)$, 
due to the Rao-Blackwell theorem
\begin{equation} \label{EQ:MVU0}
\eta(t) = E\big[ {I}(Y_n = a) \mid T_n=t \big]
        =  \frac{P(Y_n=a, T_n=t)}{P(T_n=t)} .
\end{equation}
Since $Y_n=a$ and $T_n = Y_1+Y_2+\cdots+Y_n=t$, we have $T_{n-1} = Y_1+Y_2+\cdots+Y_{n-1}=t-a$ and
$T_{n-1}$ is independent of $Y_n$.  Thus, we have
\[
\eta(t) = \frac{P(Y_n=a, T_{n-1}=t-a)}{P(T_n=t)} = \frac{P(Y_n=a) \cdot P(T_{n-1}=t-a)}{P(T_n=t)}.
\]
Note that the pmfs of $Y_n=a$ and $T_{n-1}=t-a$ are given by
$f(a)$ in (\ref{EQ:pmfofGeo}) and $g_{n-1}(t-a)$ from (\ref{EQ:pmfofNGn}), respectively.
Thus, we have
\[
\eta(t) = \frac{f(a) \cdot g_{n-1}(t-a)}{g_n(t)}
= \frac{p\cdot\binom{t-a-(n-1)a+n-2}{n-2} p^{n-1} (1-p)^{t-a-(n-1)a}}{\binom{t-na+n-1}{n-1} p^n (1-p)^{t-na}},
\]
which can be simplified as
\begin{equation} \label{EQ:etat}
\eta(t) = \frac{n-1}{t- na + n-1}.
\end{equation}
Using (\ref{EQ:etat}), we obtain the MVU estimator of $p$ which is given by
\[
\hat{p}_{\mathrm{mvu}} = \frac{(n-1)/n}{\bar{Y}-a+1-1/n}.
\]
This completes the proof.
\end{proof}

\section{Parameter estimation with unequal sample sizes} \label{Section:03}
We assume that there are $m$ samples and each sample has different sample sizes.
We denote the size of the $i$th sample by $n_i$ for $i =1, \ldots, m$.
Let $X_{ij}$ be the number of independent Bernoulli trials (cases) until the first 
nonconforming case in the $i$th sample for $i =1, \ldots, m$ and $j = 1, \ldots, n_i$.
We assume that $X_{ij}$'s are iid geometric random variables with location shift $a$ and Bernoulli probability $p$.
Let $T_N = \sum_{i=1}^{m}\sum_{j=1}^{n_i} X_{ij}$ and $N=\sum_{i=1}^{m} n_i$.
Then it is easily seen from (\ref{EQ:pmfofNGn})
that $T_N$ has the negative binomial with predefined location shift $Na$ and Bernoulli probability $p$
and its pmf is given by 
\begin{equation} \label{EQ:pmfofNGN}
g_N(t) = P(T_N=t) = \binom{t-Na+N-1}{N-1} p^N (1-p)^{t-Na},
\end{equation}
where $t = Na, Na+1, \ldots$. It is immediate from (\ref{EQ:MLE0}) that the ML estimator with all the samples is
given by 
\[
\hat{p}_{\mathrm{ml}} = \frac{1}{\bar{\bar{X}}-a+1},
\]
where
$\bar{\bar{X}} = \sum_{i=1}^{m}\sum_{j=1}^{n_i} X_{ij} / N$.
By following Theorem~\ref{THM:MVU}, we obtain the MVU estimator of $p$ which is given by  
\[
\hat{p}_{\mathrm{mvu}} = \frac{(N-1)/N}{\bar{\bar{X}}-a+1-1/N}.
\]
To the best of our knowledge, the MVU estimator $\hat{p}_{\mathrm{mvu}}$ above 
has not yet been used in the quality engineering literature.
For example, \cite{Benneyan:2001} and \cite{MinitabSupport20} use the following estimator $\hat{p}_{\mathrm{b}}$
as the MVU estimator of $p$
\begin{equation} \label{EQ:pb}
\hat{p}_{\mathrm{b}} = \frac{(N-1)/N}{\bar{\bar{X}}-a+1},
\end{equation}
which is however not unbiased.
As an illustration, consider the case of the degenerating geometric distribution with $p=1$.
Then we have $P(X_{ij}=a) = 1$, so that $\bar{\bar{X}}=a$. 
Thus, we have $\hat{p}_{\mathrm{ml}} =1$ and $\hat{p}_{\mathrm{mvu}} =1$,
whereas $\hat{p}_{\mathrm{b}} = 1-1/N$, indicating that $\hat{p}_{\mathrm{b}}$ is not unbiased.

It is worth noting that the estimator $\hat{p}_{\mathrm{b}}$ in (\ref{EQ:pb}) was obtained
by simply multiplying the ML estimator
with the factor $(N-1)/N$, that is, $\hat{p}_{\mathrm{b}} = \hat{p}_{\mathrm{ml}} \cdot (N-1)/N$.
For the case of the exponential distribution
with the density $f(y) = \lambda e^{-\lambda y}$, which can be regarded as a continuous version of
the geometric distribution, the MVU estimator of $\lambda$ can be obtained by simply multiplying the unbiasing factor
$(N-1)/N$ with the ML estimator; see, for example, \cite{Miyakawa:1984} and \cite{Park:2010a}.
However, this technique fails to the case of the geometric distribution. 
Also, it is of interest to provide 
the inequality relation of the three estimators considered above in the following theorem.

\begin{theorem} \label{THM:compare}
For $0<p<1$, we have
\[
\hat{p}_{\mathrm{b}} < \hat{p}_{\mathrm{mvu}} < \hat{p}_{\mathrm{ml}}.
\]
\end{theorem}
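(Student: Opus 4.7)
The plan is to reduce all three estimators to a common form via the substitution $W = \bar{\bar{X}} - a + 1$, after which both inequalities become one-line rational-function comparisons. Since every $X_{ij}\geq a$, we have $W\geq 1$, with strict inequality $W>1$ whenever at least one observation strictly exceeds $a$, which is the generic case under $0<p<1$.

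Using this substitution I would rewrite
\[
\hat{p}_{\mathrm{ml}} = \frac{1}{W}, \qquad \hat{p}_{\mathrm{mvu}} = \frac{N-1}{NW-1}, \qquad \hat{p}_{\mathrm{b}} = \frac{N-1}{NW},
\]
assuming $N\geq 2$ so that the numerators $N-1$ are positive and $NW-1>0$. Now both comparisons have identical-sign numerators and can be checked by inspecting the denominators.

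The left inequality $\hat{p}_{\mathrm{b}}<\hat{p}_{\mathrm{mvu}}$ is immediate: the two numerators are equal and positive, while $0<NW-1<NW$, so the ratio with the larger denominator is the smaller one. The right inequality $\hat{p}_{\mathrm{mvu}}<\hat{p}_{\mathrm{ml}}$ follows by cross-multiplying (both denominators are positive) and reducing to $(N-1)W<NW-1$, which is equivalent to $W>1$, i.e., $\bar{\bar{X}}>a$.

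The argument is essentially routine algebra, so there is no real obstacle. The only subtle point worth flagging is that the strict inequality $\hat{p}_{\mathrm{mvu}}<\hat{p}_{\mathrm{ml}}$ requires $W>1$; in the boundary realization $\bar{\bar{X}}=a$ (all $X_{ij}=a$) one only obtains $\hat{p}_{\mathrm{mvu}}=\hat{p}_{\mathrm{ml}}=1$. The hypothesis $0<p<1$ is what excludes the uniformly degenerate geometric distribution and ensures that, apart from this boundary event, the chain of strict inequalities holds as stated.
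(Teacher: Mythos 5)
Your proof is correct, and it reaches the second inequality by a slightly different route than the paper: where you cross-multiply $\frac{N-1}{NW-1}<\frac{1}{W}$ directly and reduce it to $W>1$, the paper invokes the mediant inequality, observing that $\hat{p}_{\mathrm{ml}}$ is the mediant of $\hat{p}_{\mathrm{mvu}}$ and $\frac{1/N}{1/N}$, so that $\hat{p}_{\mathrm{mvu}}<\hat{p}_{\mathrm{ml}}<1$. The two arguments are algebraically equivalent (the mediant inequality is itself proved by the same cross-multiplication), and your first inequality is identical to the paper's denominator comparison. What your version buys is transparency about the hidden hypothesis: the strict inequality $\hat{p}_{\mathrm{mvu}}<\hat{p}_{\mathrm{ml}}$ requires $\bar{\bar{X}}>a$, and the realization $\bar{\bar{X}}=a$ (all observations at the minimum) occurs with probability $p^N>0$ even when $0<p<1$, in which case $\hat{p}_{\mathrm{mvu}}=\hat{p}_{\mathrm{ml}}=1$. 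The paper's mediant argument needs exactly the same condition --- it requires $\hat{p}_{\mathrm{mvu}}<\frac{1/N}{1/N}=1$, which again fails when $\bar{\bar{X}}=a$ --- but does not flag it, so your caveat is a genuine (if minor) improvement in precision; strictly speaking the theorem holds almost surely only in the limit, or should be read as holding on the event $\bar{\bar{X}}>a$.
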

\begin{proof}
First, we show that $\hat{p}_{\mathrm{b}} < \hat{p}_{\mathrm{mvu}}$.
Since the denominator of $\hat{p}_{\mathrm{b}}$ is always larger than that of $\hat{p}_{\mathrm{mvu}}$,
we have $\hat{p}_{\mathrm{b}} < \hat{p}_{\mathrm{mvu}}$.

Next, we show that $\hat{p}_{\mathrm{mvu}} < \hat{p}_{\mathrm{ml}}$.
To prove this, we use the fact that the mediant of the two fractions is positioned between them, that is,
\[
\frac{a}{c} < \frac{a+b}{c+d} < \frac{b}{d},
\]
where $a/c < b/d$ and $a,b,c,d>0$.
The estimator $\hat{p}_{\mathrm{ml}}$ is the mediant of $\hat{p}_{\mathrm{mvu}}$ and $(1/N)/(1/N)$, that is,
\[
\frac{1-1/N}{\bar{\bar{X}}-a+1-1/N} < \frac{1}{\bar{\bar{X}}-a+1} < \frac{1/N}{1/N},
\]
which completes the proof.
\end{proof}
We observe from Theorem~\ref{THM:compare} that $\hat{p}_{\mathrm{b}}$ tends to underestimate the true value $p$
and that $\hat{p}_{\mathrm{ml}}$ tends to overshoot the true value.
Since $\hat{p}_{\mathrm{b}}$ and $\hat{p}_{\mathrm{ml}}$ are biased, a natural question arises: 
what are the theoretical biases of these estimators?
In what follows, we provide the first moments of these estimators so that the biases of the estimators are easily
 obtained by subtracting the true value of $p$ from their first moments. 

\begin{theorem} \label{THM:m1}
For $0<p<1$, we have
\begin{align*}
E(\hat{p}_{\mathrm{ml}})    & = p^N \cdot {_2}F_1(N,N;N+1;1-p) 
\intertext{and}
E(\hat{p}_{\mathrm{b}})     & = \left(\frac{N-1}{N}\right) p^N \cdot {_2}F_1(N,N;N+1;1-p) ,
\end{align*}
where ${_2}F_1(\cdot)$ is the Gaussian hypergeometric function.
\end{theorem}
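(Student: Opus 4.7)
The plan is to compute $E(\hat{p}_{\mathrm{ml}})$ directly as a single series over the pmf of $T_N$ given in (\ref{EQ:pmfofNGN}), and then match that series to the defining series of ${_2}F_1$. The formula for $E(\hat{p}_{\mathrm{b}})$ will then follow instantly from the identity $\hat{p}_{\mathrm{b}} = \hat{p}_{\mathrm{ml}} \cdot (N-1)/N$ by linearity of expectation.

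First I would rewrite $\hat{p}_{\mathrm{ml}} = 1/(\bar{\bar{X}} - a + 1) = N/(T_N - Na + N)$, so that
\[
E(\hat{p}_{\mathrm{ml}}) = \sum_{t=Na}^{\infty} \frac{N}{t-Na+N}\binom{t-Na+N-1}{N-1}p^{N}(1-p)^{t-Na}.
\]
Substituting $s = t - Na$ collapses this to
\[
E(\hat{p}_{\mathrm{ml}}) = p^{N}\sum_{s=0}^{\infty}\frac{N}{s+N}\binom{s+N-1}{N-1}(1-p)^{s}.
\]

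Next I would convert the coefficient into Pochhammer form to match the series
\[
{_2}F_1(N,N;N+1;z) = \sum_{s=0}^{\infty}\frac{(N)_s (N)_s}{(N+1)_s \, s!}z^{s}.
\]
The two needed identities are $\binom{s+N-1}{N-1} = (N)_s/s!$ and $N/(s+N) = (N)_s/(N+1)_s$, both of which follow by writing rising factorials as ratios of factorials (for the latter, $(N)_s/(N+1)_s = [(N+s-1)!/(N-1)!]\cdot[N!/(N+s)!] = N/(N+s)$). Substituting these in gives
\[
E(\hat{p}_{\mathrm{ml}}) = p^{N}\sum_{s=0}^{\infty}\frac{(N)_s (N)_s}{(N+1)_s \, s!}(1-p)^{s} = p^{N}\cdot {_2}F_1(N,N;N+1;1-p),
\]
which is the first claim. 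The second claim is then immediate from $\hat{p}_{\mathrm{b}} = ((N-1)/N)\,\hat{p}_{\mathrm{ml}}$ applied in (\ref{EQ:pb}).

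I do not anticipate a substantive obstacle; the only care needed is convergence, but since $0 < 1-p < 1$ the ${_2}F_1$ series converges absolutely, so term-by-term summation and the interchange of expectation with the sum over $s$ are fully justified. The only genuinely non-routine ingredient is spotting the Pochhammer rewriting $N/(s+N) = (N)_s/(N+1)_s$ that converts the extra factor $N/(s+N)$ into exactly the shift from ${_1}F_0$-type (geometric series) to ${_2}F_1$ with the parameters $(N,N;N+1)$.
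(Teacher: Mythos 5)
Your proof is correct, but it takes a genuinely different route from the paper. You evaluate the single series $\sum_{s\ge 0}\frac{N}{s+N}\binom{s+N-1}{N-1}p^N(1-p)^s$ by rewriting the coefficients in Pochhammer form, using $\binom{s+N-1}{N-1}=(N)_s/s!$ and $N/(s+N)=(N)_s/(N+1)_s$ (both of which check out), and matching it term by term to the defining series of ${_2}F_1(N,N;N+1;1-p)$ in (\ref{EQ:defhyper}); absolute convergence for $0<1-p<1$ makes this immediate. The paper instead inserts the integral identity $(1-p)^{t+N}/(t+N)=\int_p^1(1-y)^{t+N-1}\,dy$, interchanges sum and integral, collapses the inner sum using the fact that the negative binomial pmf sums to one, arrives at the incomplete beta function $B_{1-p}(N,1-N)$ with a negative second argument, and only then invokes the hypergeometric representation (\ref{EQ:BetaHyper}) to reach the same ${_2}F_1$. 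Your argument is shorter and more elementary: it avoids the sum--integral interchange and the somewhat delicate incomplete beta function with negative parameter entirely. What the paper's integral machinery buys is reuse in Theorem~\ref{THM:m2}, where the inner sum is recognized as the first moment and an integral of ${_2}F_1$ yields the ${_3}F_2$; that said, your Pochhammer identity squares to $\bigl(N/(s+N)\bigr)^2=\bigl((N)_s/(N+1)_s\bigr)^2$, so the same direct series matching would deliver the ${_3}F_2(N,N,N;N+1,N+1;1-p)$ for the second moment just as cleanly. The deduction of $E(\hat{p}_{\mathrm{b}})$ from $\hat{p}_{\mathrm{b}}=\hat{p}_{\mathrm{ml}}\cdot(N-1)/N$ is the same in both treatments.
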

\begin{proof}
If $X_i$ has the geometric distribution with location shift $a$ and $p$,
then $X_i-a$ also follows the geometric distribution with zero shift.
Without loss of generality, we may thus assume that $a=0$.
Since $\hat{p}_{\mathrm{ml}} = 1/(\bar{\bar{X}}+1) = N/(T_N+N)$,
it is immediate upon using (\ref{EQ:pmfofNGN}) that we have
\[
E(\hat{p}_{\mathrm{ml}})
= \sum_{t=0}^{\infty} \frac{N}{t+N} \cdot g_N(t)
= \sum_{t=0}^{\infty} \frac{N}{t+N} \cdot \binom{t+N-1}{N-1} p^N (1-p)^{t},
\]
that is,
\[
E(\hat{p}_{\mathrm{ml}})
= \frac{Np^N}{(1-p)^N} \sum_{t=0}^{\infty} \binom{t+N-1}{N-1} \frac{(1-p)^{t+N}}{t+N}.
\]
Using the identity $(1-p)^{t+N}/(t+N) = \int_p^1 (1-y)^{t+N-1} dy$, we have
\begin{align}
E(\hat{p}_{\mathrm{ml}})
&= \frac{Np^N}{(1-p)^N} \sum_{t=0}^{\infty} \binom{t+N-1}{N-1} \int_p^1 (1-y)^{t+N-1} dy  \notag \\
&= \frac{Np^N}{(1-p)^N} \int_p^1 \frac{(1-y)^{N-1}}{y^N}
    \left[ \sum_{t=0}^{\infty} \binom{t+N-1}{N-1} y^N (1-y)^{t}\right] dy.
\label{EQ:phatml1}
\end{align}
Since $\binom{t+N-1}{N-1} y^N (1-y)^{t}$ is the pmf of the negative binomial distribution,
we have 
\[
\sum_{t=0}^{\infty} \binom{t+N-1}{N-1} y^N (1-y)^{t}=1.
\]
Thus, Equation (\ref{EQ:phatml1}) can be further simplified as 
\[
E(\hat{p}_{\mathrm{ml}})
= \frac{Np^N}{(1-p)^N} \int_p^1 {(1-y)^{N-1}}{y^{-N}}  dy.
\]
Using the integration by substitution with $x=1-y$, the above is written as
\begin{align*}
E(\hat{p}_{\mathrm{ml}})
&= \frac{Np^N}{(1-p)^N} \int_0^{1-p} x^{N-1} (1-x)^{-N} dx \\
&= \frac{Np^N}{(1-p)^N} \cdot B_{1-p}(N,1-N),
\end{align*}
where $B_x(a,b)$ is the  incomplete beta function defined as
\[
B_x(a,b) = \int_0^x y^{a-1} (1-y)^{b-1} dy.
\]

It deserves mentioning that the calculation of $B_{1-p}(N,1-N)$ can be complex
because few software packages provide its calculation with negative argument.
To deal with this difficulty, one can use the hypergeometric representation of the incomplete beta function
\citep{Dutka:1981,Ozarslan/Ustaoglu:2019} which is given by 
\begin{equation} \label{EQ:BetaHyper}
B_x(a,b) = \frac{x^a}{a} \cdot {_2}F_1(a,1-b;a+1;x).
\end{equation}
Here ${_p}F_q(\cdot)$ is the hypergeometric function \citep{Abramowitz:1964,Seaborn:1991} and it is defined as
\begin{equation} \label{EQ:defhyper}
{_p}F_q(a_1,\ldots,a_p; b_1,\ldots,b_q; z)
 = \sum_{n=0}^{\infty} \frac{(a_1)_n\cdots (a_p)_n}{ (b_1)_n\cdots (b_q)_n} \frac{z^n}{n!},
\end{equation}
where $(a)_n$ is the Pochhammer symbol for the rising factorial
defined as $(a)_0=1$ and $(a)_n=a(a+1)\cdots(a+n-1)$ for $n=1,2,\ldots$.
Thus, by using (\ref{EQ:BetaHyper}), we have
\begin{equation} \label{EQ:Ephatml}
E(\hat{p}_{\mathrm{ml}})
= p^N \cdot {_2}F_1(N,N;N+1;1-p).
\end{equation}
Note that we can easily obtain $E(\hat{p}_{\mathrm{b}})$ since $\hat{p}_{\mathrm{b}} = \hat{p}_{\mathrm{ml}} \cdot (N-1)/N$.
This completes the proof.
\end{proof}

By using the well-known Euler transformation formula for the hypergeometric function 
\citep{Miller:2011} which is given by 
\[
{_2}F_1(a,b;c;z) = (1-z)^{c-a-b} {_2}F_1(c-a,c-b;c;z), 
\]
we obtain $E(\hat{p}_{\mathrm{ml}}) = p \cdot {_2}F_1(1,1;N+1;1-p)$. 
Then according the definition of the hypergeometric function in (\ref{EQ:defhyper}), we have
\begin{align*}
E(\hat{p}_{\mathrm{ml}}) 
&= p \sum_{n=0}^{\infty} \frac{(1)_n (1)_n}{(N+1)_k} \frac{(1-p)^n}{n!}  \\
&= p \sum_{n=0}^{\infty} \frac{N!~n!}{(N+n)!} (1-p)^n \\
&= p + \sum_{n=1}^{\infty}  \frac{p(1-p)^n}{\binom{N+n}{n}} 
\end{align*}
since $(1)_n = n!$ and $(N+1)_n=(N+n)! / N!$. 
Then the biases of the estimators $\hat{p}_{\mathrm{ml}}$ and $\hat{p}_{\mathrm{b}}$ are obtained as
\begin{align*}
\mathrm{Bias}(\hat{p}_{\mathrm{ml}}) &=  \sum_{n=1}^{\infty}  \frac{p(1-p)^n}{\binom{N+n}{n}} \\
\intertext{and}
\mathrm{Bias}(\hat{p}_{\mathrm{b}})  &= -\frac{p}{N} + \frac{N-1}{N}\sum_{n=1}^{\infty}  \frac{p(1-p)^n}{\binom{N+n}{n}},
\end{align*}
respectively. It should be noted that the R language
provides the \texttt{hypergeo} package to calculate the hypergeometric function; see \cite{Hankin:2016}.
We can calculate the theoretical values of the biases and provide these values in Figure~\ref{FIG:BiasMSE} 
along with the empirical values.
It deserves mentioning that the theoretical bias of $\hat{p}_{\mathrm{mvu}}$ is trivially zero.

In what follows, we provide the second moments of the estimators so that 
their variances can be easily obtained using them.

\begin{theorem} \label{THM:m2}
For $0<p<1$, we have
\begin{align*}
E(\hat{p}_{\mathrm{ml}}^2)    & = p^N \cdot {_3}F_2(N,N,N; N+1,N+1; 1-p), \\
E(\hat{p}_{\mathrm{b}}^2)     & = \frac{(N-1)^2 p^N}{N^2} \cdot {_3}F_2(N,N,N; N+1,N+1; 1-p), 
\intertext{and}
E(\hat{p}_{\mathrm{mvu}}^2)   & = p^N \cdot {_2}F_1(N-1, N-1; N; 1-p).
\end{align*}
\end{theorem}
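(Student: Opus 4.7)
The plan is to emulate the derivation of Theorem~\ref{THM:m1}, but since the squared denominators $(t+N)^2$ and $(t+N-1)^2$ do not reduce cleanly under the integral substitution used there, I would instead recognize the three series directly as generalized hypergeometric functions via the Pochhammer identity
\[
\frac{1}{t+c} = \frac{(c)_t}{c\,(c+1)_t},
\qquad \text{equivalently}\qquad \frac{(c)_t}{(c+1)_t}=\frac{c}{c+t}.
\]
As in Theorem~\ref{THM:m1}, I would first reduce to the case $a=0$, so that $T_N$ is negative binomial with pmf (\ref{EQ:pmfofNGN}) now supported on $\{0,1,\ldots\}$, and use $\binom{t+N-1}{N-1}=(N)_t/t!$.

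For $E(\hat p_{\mathrm{ml}}^{\,2})$ I would write
\[
E(\hat p_{\mathrm{ml}}^{\,2}) = \sum_{t=0}^{\infty} \frac{N^{2}}{(t+N)^{2}}\,\binom{t+N-1}{N-1} p^{N}(1-p)^{t},
\]
then apply the Pochhammer identity twice (with $c=N$) to turn $1/(t+N)^{2}$ into $(N)_t^{2}/[N^{2}(N+1)_t^{2}]$. Combining with $(N)_t/t!$ from the binomial coefficient leaves the factor $N^{2}$ cancelled and produces the series $\sum_t \frac{(N)_t^{3}}{(N+1)_t^{2}\,t!}(1-p)^{t}$, which is exactly ${}_3F_2(N,N,N;N+1,N+1;1-p)$ by the definition (\ref{EQ:defhyper}). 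The formula for $E(\hat p_{\mathrm{b}}^{\,2})$ then drops out immediately from $\hat p_{\mathrm{b}} = \hat p_{\mathrm{ml}}\,(N-1)/N$.

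For $E(\hat p_{\mathrm{mvu}}^{\,2})$ the same strategy works, with the centre shifted by one. Starting from
\[
E(\hat p_{\mathrm{mvu}}^{\,2}) = (N-1)^{2}\sum_{t=0}^{\infty} \frac{1}{(t+N-1)^{2}}\,\binom{t+N-1}{N-1}\,p^{N}(1-p)^{t},
\]
I would use the Pochhammer identity with $c=N-1$ to rewrite $1/(t+N-1)^{2}=(N-1)_t^{2}/[(N-1)^{2}(N)_t^{2}]$, combine with $(N)_t/t!$ so that one copy of $(N)_t$ cancels and the prefactor $(N-1)^{2}$ is absorbed, and land on $p^{N}\sum_t \frac{(N-1)_t^{2}}{(N)_t\,t!}(1-p)^{t}= p^{N}\,{}_2F_1(N-1,N-1;N;1-p)$.

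The one piece I would want to check carefully is the bookkeeping of the Pochhammer ratios, since the cancellation of powers of $N$ (and of $N-1$) between the prefactor and $(c)_t^{2}/(c+1)_t^{2}$ is what makes the coefficients collapse to the exact pattern required by the definition of ${}_pF_q$; the rest is routine. Convergence of the resulting hypergeometric series is automatic on $|1-p|<1$, so no additional analytic justification is needed for $0<p<1$.
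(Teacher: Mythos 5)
Your derivation is correct, but it takes a genuinely different route from the paper. The paper proves the first identity by reusing its Theorem~\ref{THM:m1} machinery: it applies the integral identity $(1-p)^{t+N}/(t+N)=\int_p^1(1-y)^{t+N-1}\,dy$ once, exchanges sum and integral, recognizes the inner sum as the first moment $y^N\cdot{_2}F_1(N,N;N+1;1-y)$ already computed, and then invokes a general integral representation for ${_{p+k}}F_{q+k}$ (Theorem~38 of Rainville) to land on the ${_3}F_2$; for $E(\hat{p}_{\mathrm{mvu}}^2)$ it first absorbs one factor of $(N-1)/(t+N-1)$ into the binomial coefficient via $\bigl(\tfrac{N-1}{t+N-1}\bigr)\binom{t+N-1}{N-1}=\binom{t+N-2}{N-2}$, and then reduces to an incomplete beta function $B_{1-p}(N-1,-N+2)$ and its hypergeometric representation (\ref{EQ:BetaHyper}). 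Your approach instead identifies each series term-by-term using $\binom{t+N-1}{N-1}=(N)_t/t!$ and $(c)_t/(c+1)_t=c/(c+t)$ applied twice, which collapses the sums directly onto the definition (\ref{EQ:defhyper}). The bookkeeping you flag does check out: for the ML case the prefactor $N^2$ cancels against the $N^2$ from the two Pochhammer ratios, leaving $\sum_t (N)_t^3/[(N+1)_t^2\,t!]\,(1-p)^t$, and for the MVU case the prefactor $(N-1)^2$ cancels and one copy of $(N)_t$ in the denominator is absorbed by the $(N)_t$ from the binomial coefficient, leaving $\sum_t (N-1)_t^2/[(N)_t\,t!]\,(1-p)^t$. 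Your route is more elementary and self-contained: it avoids the sum--integral interchange, the incomplete beta function with a negative argument, and the external integral-representation theorems. What the paper's route buys is continuity with its Theorem~\ref{THM:m1} derivation and an explicit link to the classical integral representations of generalized hypergeometric functions; what yours buys is that everything follows in two lines from the series definition, with convergence immediate for $|1-p|<1$.
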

\begin{proof}
We first note that 
\begin{align*}
E(\hat{p}_{\mathrm{ml}}^2)
&= \sum_{t=0}^{\infty} \left(\frac{N}{t+N}\right)^2 \cdot g_N(t)  \\
&= \sum_{t=0}^{\infty} \left(\frac{N}{t+N}\right)^2 \cdot \binom{t+N-1}{N-1} p^N (1-p)^{t}  \\
&= \frac{Np^N}{(1-p)^N} \sum_{t=0}^{\infty} \frac{N}{t+N} \cdot \binom{t+N-1}{N-1} \frac{(1-p)^{t+N}}{t+N}.
\end{align*}
By using the identity $(1-p)^{t+N}/(t+N) = \int_p^1 (1-y)^{t+N-1} dy$, we have
\begin{align*}
E(\hat{p}_{\mathrm{ml}}^2)
&= \frac{Np^N}{(1-p)^N} \sum_{t=0}^{\infty} \frac{N}{t+N} \cdot \binom{t+N-1}{N-1} \int_p^1 (1-y)^{t+N-1} dy \\
&= \frac{Np^N}{(1-p)^N} \int_p^1 \frac{(1-y)^{N-1}}{y^N}
   \left[\sum_{t=0}^{\infty} \frac{N}{t+N} \binom{t+N-1}{N-1} y^N (1-y)^t\right] dy.
\end{align*}
The term in the integrand, $\sum_{t=0}^{\infty} \frac{N}{t+N} \binom{t+N-1}{N-1} y^N (1-y)^t$, is essentially
the same as the first moment of $\hat{p}_{\mathrm{ml}}$ with probability $y$.
Thus, it follows from (\ref{EQ:Ephatml}) that 
\[
\sum_{t=0}^{\infty} \frac{N}{t+N} \binom{t+N-1}{N-1} y^N (1-y)^t = y^N \cdot {_2}F_1(N,N;N+1;1-y),
\]
which results in
\begin{align}
E(\hat{p}_{\mathrm{ml}}^2)
&= \frac{Np^N}{(1-p)^N} \int_p^1  (1-y)^{N-1}\cdot {_2}F_1(N,N;N+1;1-y) dy  \notag \\
&= \frac{Np^N}{(1-p)^N} \int_0^{1-p} x^{N-1}\cdot {_2}F_1(N,N;N+1;x) dx .
\label{THM:EQ:Ephat2}
\end{align}
Using the general integral representation for ${_{p+k}}F_{q+k}$ in Theorem~38 of \cite{Rainville:1960}
and Section~2 of \cite{Driver/Johnston:2006}, we have
\begin{equation} \label{THM:EQ:3F2}
\int_0^{1-p} x^{N-1} \cdot {_2}F_1(N,N;N+1;x) dx
= \frac{(1-p)^{N}}{N} \cdot {_3}F_2(N,N,N;N+1,N+1;1-p).
\end{equation}
Substituting (\ref{THM:EQ:3F2}) into (\ref{THM:EQ:Ephat2}), we obtain the first result.
The second result is easily obtained from $\hat{p}_{\mathrm{b}} = \hat{p}_{\mathrm{ml}} \cdot (N-1)/N$.

Next, we have
\begin{align*}
E(\hat{p}_{\mathrm{mvu}}^2)
&= \sum_{t=0}^{\infty} \left(\frac{N-1}{t+N-1}\right)^2 \cdot g_N(t) \\
&= \sum_{t=0}^{\infty} \left(\frac{N-1}{t+N-1}\right)^2 \cdot \binom{t+N-1}{N-1} p^N (1-p)^{t}  \\
&= \sum_{t=0}^{\infty} \left(\frac{N-1}{t+N-1}\right) \cdot \binom{t+N-2}{N-2} p^N (1-p)^{t}  \\
&= \frac{(N-1)p^N}{(1-p)^{N-1}} \sum_{t=0}^{\infty} \cdot \binom{t+N-2}{N-2} \frac{(1-p)^{t+N-1}}{t+N-1}.
\end{align*}
Using the identity $(1-p)^{t+N-1}/(t+N-1) = \int_p^1 (1-y)^{t+N-2} dy$, we have
\begin{align*}
E(\hat{p}_{\mathrm{mvu}}^2)
&= \frac{(N-1)p^N}{(1-p)^{N-1}} \int_p^1 \frac{(1-y)^{N-2}}{y^{N-1}}
   \left[ \sum_{t=0}^{\infty} \binom{t+N-2}{N-2} (1-y)^t y^{N-1}\right] dy \\
&= \frac{(N-1)p^N}{(1-p)^{N-1}} \int_p^1 \frac{(1-y)^{N-2}}{y^{N-1}} dy  \\
&= \frac{(N-1)p^N}{(1-p)^{N-1}} \int_0^{1-p} x^{N-2} (1-x)^{-(N-1)} dx  \\
&= \frac{(N-1)p^N}{(1-p)^{N-1}} \cdot B_{1-p} (N-1, -N+2).
\end{align*}
Then it is immediate upon using the hypergeometric representation of the incomplete beta function 
in (\ref{EQ:BetaHyper}) that we have the result, which completes the proof.
\end{proof}

It should be noted that based on the Euler transformation formula for the hypergeometric function,
we can rewrite
\begin{align*}
E(\hat{p}_{\mathrm{mvu}}^2) 
&= p^2 \cdot {_2}F_1(1,1;N;1-p)  \\
&= p^2 + \frac{\sum_{n=1}^N p^2 (1-p)^n}{\binom{N-1+n}{n}}, 
\end{align*}
which results in 
\[
\mathrm{Var}(\hat{p}_{\mathrm{mvu}}) = \frac{\sum_{n=1}^N p^2 (1-p)^n}{\binom{N-1+n}{n}}. 
\]

In addition, we also conduct Monte Carlo simulations to study empirical biases of these estimators under consideration.
For each simulation, we generate $(n_1,n_2)=(1,1)$, $(2,3)$, $(5,5)$, $(10,10)$ samples from the geometric distribution
with Bernoulli probability $p=0.1$, $0.3$, $0.5$, $0.7$, $0.9$ with the location shift $a$ being always zero.
To obtain empirical biases and empirical mean square errors (MSEs),
we iterate this experiment $I=10,000$ times.
It should be noted that the existing methods are all biased so that it is more appropriate to compare their empirical MSEs
instead of the empirical variances.  
The empirical biases and MSEs are provided in Tables~\ref{TBL:bias} and \ref{TBL:MSE}. 
The values of the theoretical MSEs are easily obtained using Theorems~\ref{THM:m1} and \ref{THM:m2}
and we also plot the these values along with the biases in Figure~\ref{FIG:BiasMSE}.
In the figure, to compare the empirical and theoretical values, we also superimposed the empirical values 
with the legends {\Large$\circ$} ($n_1=1$, $n_2=1$), $\times$ ($n_1=2$, $n_2=3$), and $\bullet$ ($n_1=5$, $n_2=5$).

\begin{figure}[t!]
\centering%
\includegraphics{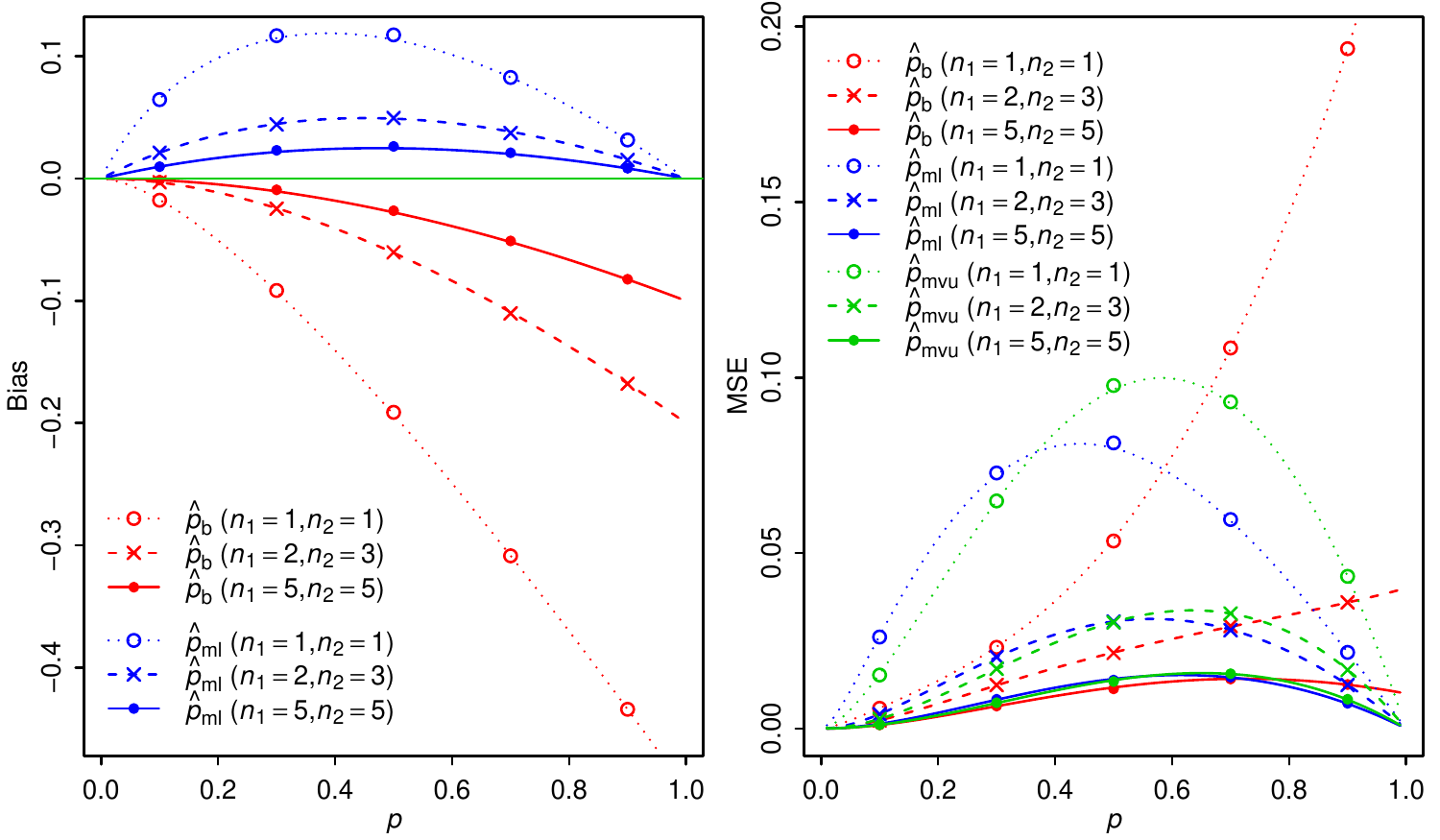}  
\caption{Theoretical values of the biases and MSEs of the estimators under consideration.
The empirical values are denoted by the legends {\Large$\circ$}, $\times$, and $\bullet$.}
\label{FIG:BiasMSE}
\end{figure}

\begin{table}[t!]
\caption{\label{TBL:bias}
Empirical biases of $\hat{p}_{\mathrm{b}}$, $\hat{p}_{\mathrm{mvu}}$ and $\hat{p}_{\mathrm{ml}}$.}
\medskip
\centering%
\renewcommand{\arraystretch}{0.75} 
\begin{small}
\begin{tabular}{cllrlrlrlr} 
\hline
\multicolumn{2}{r}{$(n_1, n_2)$} && $(1,1)$     && $(2,3)$     && $(5,5)$ && $(10,10)$ \\
\hline
$p=0.1$ & $\hat{p}_{\mathrm{b}}$ &&  $-0.01768$ && $-0.00301$ && $-0.00122$ && $-0.00061$ \\
    & $\hat{p}_{\mathrm{mvu}}$   &&  $-0.00060$ && $ 0.00005$ && $ 0.00000$ && $-0.00006$ \\
    &   $\hat{p}_{\mathrm{ml}}$  &&  $ 0.06464$ && $ 0.02124$ && $ 0.00975$ && $ 0.00462$ \\
\cline{1-2}  \cline{4-4} \cline{6-6} \cline{8-8} \cline{10-10}
$p=0.3$ & $\hat{p}_{\mathrm{b}}$ &&  $-0.09153$ && $-0.02461$ && $-0.00917$ && $-0.00492$ \\
    & $\hat{p}_{\mathrm{mvu}}$   &&  $ 0.00176$ && $-0.00044$ && $ 0.00133$ && $-0.00008$ \\
    &   $\hat{p}_{\mathrm{ml}}$  &&  $ 0.11693$ && $ 0.04424$ && $ 0.02314$ && $ 0.01062$ \\
\cline{1-2}  \cline{4-4} \cline{6-6} \cline{8-8} \cline{10-10}
$p=0.5$ & $\hat{p}_{\mathrm{b}}$ &&  $-0.19120$ && $-0.06032$ && $-0.02628$ && $-0.01309$ \\
    & $\hat{p}_{\mathrm{mvu}}$   &&  $ 0.00481$ && $ 0.00051$ && $ 0.00144$ && $ 0.00004$ \\
    &   $\hat{p}_{\mathrm{ml}}$  &&  $ 0.11759$ && $ 0.04960$ && $ 0.02636$ && $ 0.01254$ \\
\cline{1-2}  \cline{4-4} \cline{6-6} \cline{8-8} \cline{10-10}
$p=0.7$ & $\hat{p}_{\mathrm{b}}$ &&  $-0.30864$ && $-0.11016$ && $-0.05098$ && $-0.02629$ \\
    & $\hat{p}_{\mathrm{mvu}}$   &&  $ 0.00017$ && $-0.00120$ && $ 0.00107$ && $-0.00114$ \\
    &   $\hat{p}_{\mathrm{ml}}$  &&  $ 0.08272$ && $ 0.03731$ && $ 0.02114$ && $ 0.00917$ \\
\cline{1-2}  \cline{4-4} \cline{6-6} \cline{8-8} \cline{10-10}
$p=0.9$ & $\hat{p}_{\mathrm{b}}$ &&  $-0.43420$ && $-0.16783$ && $-0.08233$ && $-0.04134$ \\
    & $\hat{p}_{\mathrm{mvu}}$   &&  $-0.00005$ && $-0.00030$ && $ 0.00023$ && $-0.00049$ \\
    &   $\hat{p}_{\mathrm{ml}}$  &&  $ 0.03160$ && $ 0.01521$ && $ 0.00852$ && $ 0.00385$ \\
\hline
\end{tabular}
\end{small}
\end{table}

\begin{table}[t!]
\caption{\label{TBL:MSE}
Empirical MSEs of $\hat{p}_{\mathrm{b}}$, $\hat{p}_{\mathrm{mvu}}$ and $\hat{p}_{\mathrm{ml}}$.}
\medskip
\centering%
\renewcommand{\arraystretch}{0.75} 
\begin{small}
\begin{tabular}{cllrlrlrlr} 
\hline
\multicolumn{2}{r}{$(n_1, n_2)$} && $(1,1)$     && $(2,3)$     && $(5,5)$ && $(10,10)$ \\
\hline
$p=0.1$ & $\hat{p}_{\mathrm{b}}$ && $0.00579$ && $0.00236$ && $0.00105$ && $ 0.00050$   \\
    & $\hat{p}_{\mathrm{mvu}}$   && $0.01527$ && $0.00274$ && $0.00111$ && $ 0.00052$   \\
    &   $\hat{p}_{\mathrm{ml}}$  && $0.02609$ && $0.00412$ && $0.00140$ && $ 0.00058$   \\
\cline{1-2}  \cline{4-4} \cline{6-6} \cline{8-8} \cline{10-10}
$p=0.3$ & $\hat{p}_{\mathrm{b}}$ && $0.02317$ && $0.01242$ && $0.00642$ && $ 0.00320$   \\
    & $\hat{p}_{\mathrm{mvu}}$   && $0.06487$ && $0.01706$ && $0.00737$ && $ 0.00340$   \\
    &   $\hat{p}_{\mathrm{ml}}$  && $0.07284$ && $0.02041$ && $0.00836$ && $ 0.00363$   \\
\cline{1-2}  \cline{4-4} \cline{6-6} \cline{8-8} \cline{10-10}
$p=0.5$ & $\hat{p}_{\mathrm{b}}$ && $0.05345$ && $0.02156$ && $0.01133$ && $ 0.00588$   \\
    & $\hat{p}_{\mathrm{mvu}}$   && $0.09775$ && $0.03027$ && $0.01341$ && $ 0.00636$   \\
    &   $\hat{p}_{\mathrm{ml}}$  && $0.08140$ && $0.03047$ && $0.01383$ && $ 0.00649$   \\
\cline{1-2}  \cline{4-4} \cline{6-6} \cline{8-8} \cline{10-10}
$p=0.7$ & $\hat{p}_{\mathrm{b}}$ && $0.10844$ && $0.02922$ && $0.01413$ && $ 0.00724$   \\
    & $\hat{p}_{\mathrm{mvu}}$   && $0.09309$ && $0.03281$ && $0.01564$ && $ 0.00757$   \\
    &   $\hat{p}_{\mathrm{ml}}$  && $0.05957$ && $0.02808$ && $0.01468$ && $ 0.00734$   \\
\cline{1-2}  \cline{4-4} \cline{6-6} \cline{8-8} \cline{10-10}
$p=0.9$ & $\hat{p}_{\mathrm{b}}$ && $0.19371$ && $0.03597$ && $0.01256$ && $ 0.00512$   \\
    & $\hat{p}_{\mathrm{mvu}}$   && $0.04339$ && $0.01671$ && $0.00837$ && $ 0.00409$   \\
    &   $\hat{p}_{\mathrm{ml}}$  && $0.02172$ && $0.01242$ && $0.00721$ && $ 0.00379$   \\
\hline
\end{tabular}
\end{small}
\end{table}

The values of the empirical biases of $\hat{p}_{\mathrm{b}}$ 
are always negative and those of $\hat{p}_{\mathrm{ml}}$ are always positive, 
which is expected from Theorem~\ref{THM:compare}, 
and both biases tend to decrease as the sample sizes increase.
It is worth noting that the bias of $\hat{p}_{\mathrm{b}}$ is really serious, 
especially when the sample size is small and the probability $p$ is large. However,
the empirical biases of $\hat{p}_{\mathrm{mvu}}$ are very close to zero for all the cases
as expected from the fact that its theoretical bias is zero.
Numerical results clearly show that the proposed estimator $\hat{p}_{\mathrm{mvu}}$ outperforms
the existing estimators.
On the other hand, the bias of $\hat{p}_{\mathrm{ml}}$ is larger when $p$ is around 0.5.
With $n_1=1$ and $n_2=1$, the bias of $\hat{p}_{\mathrm{b}}$ can reach around 0.5 with $p$ close to 1
and that of $\hat{p}_{\mathrm{ml}}$ can reach around 0.1 with $p$ around 0.5.
Considering that the value of $p$ is always in $(0,1)$,
the biases of $\hat{p}_{\mathrm{b}}$ and $\hat{p}_{\mathrm{ml}}$
are really serious. As $N$ gets larger, the bias gets smaller, 
whereas the bias of $\hat{p}_{\mathrm{b}}$ is still severe with a large value of $p$.

\section{Construction of the $g$ and $h$ control charts}  \label{Section:04}
As we did earlier, we let $X_{ij}$ be the number of independent Bernoulli trials (cases)
until the first nonconforming case in the $i$th sample for $i=1,2,\ldots,m$ and $j=1,2,\ldots,n_i$.
Then $X_{ij}$'s are iid geometric random variables with location shift $a$ and $p$.
Let $\bar{X}_k$ be the mean of the $k$th sample with sample size $n_k$.

Based on the asymptotic theory, we have
\[
\frac{\bar{X}_k - \mu}{\sqrt{\sigma^2 / n_k}} \stackrel{\bullet}{\sim} N(0,1),
\]
where $\mu=E(X_{kj}) = (1-p)/p+a$ and $\sigma^2=\mathrm{Var}(X_{kj}) = (1-p)/p^2$.
We can construct the control chart for average number of events per subgroup (the $h$ chart)
 with $\mathrm{CL} \pm g\cdot\mathrm{SE}$ control limits
\[
\frac{\bar{X}_k - \mu_k}{\sqrt{\sigma^2 / n_k}} = \pm g,
\]
which results in the upper control limit (UCL), lower control limit (LCL) and center line (CL) as follows
\begin{align}
\mathrm{UCL} &= {\mu} + g \sqrt{\frac{\sigma^2}{n_k}} = \frac{1-p}{p}+a  + g\sqrt{\frac{1-p}{n_k p^2}}, \notag \\
\mathrm{CL}  &= {\mu}  = \frac{1-p}{p}+a,   \label{EQ:Limits-h-chart}  \\
\mathrm{LCL} &= {\mu} - g \sqrt{\frac{\sigma^2}{n_k}} = \frac{1-p}{p}+a  - g\sqrt{\frac{1-p}{n_k p^2}} . \notag
\end{align}
It deserves mentioning that the American Standard uses $g=3$ with an ideal false alarm rate 0.27\%
and British Standard uses $g=3.09$ with 0.20\%.

By setting up $(n_k \bar{X}_k - n_k \mu_k)/\sqrt{n_k \sigma^2} = \pm g$, we can also construct the control chart
for the total number of events per subgroup (the $g$ chart) and its control limits are given by
\begin{align}
\mathrm{UCL} &= n_k{\mu} + g \sqrt{n_k \sigma^2} = n_k\left(\frac{1-p}{p}+a\right) + g\sqrt{\frac{n_k(1-p)}{p^2}}, \notag \\
\mathrm{CL}  &= n_k{\mu}  = n_k\left(\frac{1-p}{p}+a\right),    \label{EQ:Limits-g-chart}   \\
\mathrm{LCL} &= n_k{\mu} - g \sqrt{n_k \sigma^2} = n_k\left(\frac{1-p}{p}+a\right) - g\sqrt{\frac{n_k(1-p)}{p^2}} . \notag
\end{align}

In practice, the parameters $\mu$ and $\sigma^2$ are unknown and can be estimated by substituting
an estimator of $p$ through the relationship $\mu = 1/p-1+a$ and $\sigma^2 = (1-p)/p^2$. 
However, a care should be taken in this case.
For example, $\hat{p}_{\mathrm{mvu}}$ is unbiased for $p$, but $1/\hat{p}_{\mathrm{mvu}}$ is not unbiased for $1/p$.
We have shown that $\hat{p}_{\mathrm{ml}}$ is not unbiased for $p$,
whereas $1/\hat{p}_{\mathrm{ml}}$ is actually unbiased for $1/p$.
Thus, we estimate $\mu=1/p-1+a$ using $\hat{\mu} = 1/\hat{p}_{\mathrm{ml}}-1+a$, which results in
$\hat{\mu} = \bar{\bar{X}}$.
Since $\bar{\bar{X}}=T_N/N = \sum_{i=1}^{m}\sum_{j=1}^{n_i} X_{ij}/N$ is a complete sufficient statistic,
$\hat{\mu} = \bar{\bar{X}}$ is the MVU estimator of $\mu$ due to the Lehmann-Scheff\'{e} theorem.
For more details on this theorem, see Theorem 7.4.1 of \cite{Hogg/McKean/Craig:2013}.
It should be noted that $\hat{\mu} = \bar{\bar{X}}$ is also the ML estimator because of
the invariance property of the ML estimator \citep[for example, see Theorem 7.2.10 of][]{Casella/Berger:2002}.
Thus, it is clear that one should use $\hat{\mu} = \bar{\bar{X}}$ to estimate the CL, which results in
$\mathrm{CL} = \bar{\bar{X}}$ ($h$ chart) and $\mathrm{CL} =n_k \bar{\bar{X}}$ ($g$ chart).

To estimate $\sigma^2$, we consider the ML estimator of $\sigma^2$ by
plugging $\hat{p}_{\mathrm{ml}}$ into $\sigma^2 = (1-p)/p^2$, which results in 
\begin{equation} \label{EQ:sigma2ml}
\hat{\sigma}^2_{\mathrm{ml}} = (\bar{\bar{X}}-a) (\bar{\bar{X}}-a+1).
\end{equation}
The MVU estimator of $\sigma^2$ is also easily obtained using
the Lehmann-Scheff\'{e} theorem with $E\big[(T_N/N)\cdot(T_N+N)/(N+1)\big] = (1-p)/p^2$.
Then we have
\begin{equation} \label{EQ:sigma2mvu}
\hat{\sigma}^2_{\mathrm{mvu}} = \frac{N}{N+1}  (\bar{\bar{X}}-a) (\bar{\bar{X}}-a+1).
\end{equation}

Using $\hat{\mu} = \bar{\bar{X}}$ and $\hat{\sigma}^2_{\mathrm{ml}}$ in (\ref{EQ:sigma2ml}) along with
 (\ref{EQ:Limits-h-chart}) and (\ref{EQ:Limits-g-chart}), we can construct the ML-based $h$ and $g$ charts as follows.
\begin{itemize}
\item $h$ chart:
\begin{align*}
\mathrm{UCL} &= \bar{\bar{X}} + g\sqrt{\frac{(\bar{\bar{X}}-a) (\bar{\bar{X}}-a+1)}{n_k}},  \\
\mathrm{CL}  &= \bar{\bar{X}},   \\
\mathrm{LCL} &= \bar{\bar{X}} - g\sqrt{\frac{(\bar{\bar{X}}-a) (\bar{\bar{X}}-a+1)}{n_k}}.
\end{align*}
\item $g$ chart:
\begin{align*}
\mathrm{UCL} &= n_k\bar{\bar{X}} + g \sqrt{n_k (\bar{\bar{X}}-a) (\bar{\bar{X}}-a+1)},  \\
\mathrm{CL}  &= n_k\bar{\bar{X}},   \\
\mathrm{LCL} &= n_k\bar{\bar{X}} - g \sqrt{n_k (\bar{\bar{X}}-a) (\bar{\bar{X}}-a+1)}.
\end{align*}
\end{itemize}
Also, using $\hat{\mu} = \bar{\bar{X}}$ and $\hat{\sigma}^2_{\mathrm{mvu}}$ in (\ref{EQ:sigma2mvu}) along with
 (\ref{EQ:Limits-h-chart}) and (\ref{EQ:Limits-g-chart}), we can construct the MVU-based $h$ and $g$ charts as follows.
\begin{itemize}
\item $h$ chart:
\begin{align*}
\mathrm{UCL} &= \bar{\bar{X}} + g\sqrt{\frac{N}{N+1}\frac{(\bar{\bar{X}}-a) (\bar{\bar{X}}-a+1)}{n_k}},  \\
\mathrm{CL}  &= \bar{\bar{X}},   \\
\mathrm{LCL} &= \bar{\bar{X}} - g\sqrt{\frac{N}{N+1}\frac{(\bar{\bar{X}}-a) (\bar{\bar{X}}-a+1)}{n_k}}.
\end{align*}
\item $g$ chart:
\begin{align*}
\mathrm{UCL} &= n_k\bar{\bar{X}} + g \sqrt{\frac{n_k N}{N+1} (\bar{\bar{X}}-a) (\bar{\bar{X}}-a+1)},  \\
\mathrm{CL}  &= n_k\bar{\bar{X}},   \\
\mathrm{LCL} &= n_k\bar{\bar{X}} - g \sqrt{\frac{n_k N}{N+1} (\bar{\bar{X}}-a) (\bar{\bar{X}}-a+1)}.
\end{align*}
\end{itemize}

It should be noted that \cite{Kaminsky/etc:1992} provide the control limits for the MVU-based $h$ and $g$ charts in
their Table~1, but these limits are based on $\hat{p}_{\mathrm{b}}$ which is not the MVU.
Also, one can also construct the control limits by plugging the MVU estimator $\hat{p}_{\mathrm{mvu}}$ into
(\ref{EQ:Limits-h-chart}) and (\ref{EQ:Limits-g-chart}).
However, like the ML estimator, the MVU estimator has no invariance property.
Thus, in this case, the resulting limits can not be regarded as the MVU-based limits.

\section{Concluding remarks}  \label{Section:05}
We have revisited the $g$ and $h$ control charts with proper ML and MVU estimators.
We have shown that the MVU estimator has been inappropriately used in the quality engineering literature
and thus provided the correct MVU estimator along with various statistical properties such as
their theoretical first and second moments which are explicitly expressed as the Gauss hypergeometric function.
Furthermore, based on the new estimators developed in this note, 
we provided how to construct the ML-based and MVU-based $h$ and $g$ control charts with unbalanced samples.

Finally, it is worth noting that we have developed the \texttt{rQCC} R package
\citep{Park/Wang:2020b} to construct various control charts. In ongoing work, 
we plan to add these control charts in the next update so that practitioners can use our results more easily.

\section*{Acknowledgment}
This research was supported by the National Research Foundation of Korea (NRF) grant
(NRF-2017R1A2B4004169) and the BK21-Plus Program (Major in Industrial Data Science and Engineering)
funded by the Korea government.

\bibliographystyle{chicago}
\bibliography{REFmw92}

\end{document}